\newcommand{\cN}{\mathcal{N}}
\newcommand{\pV}[1]{\widetilde{V}^{(#1)}}
\newcommand{\eps}{\varepsilon}
\newcommand{\poly}{\mathrm{poly}}
\newcommand{\bu}{\boldsymbol{u}}
\newcommand{\bv}{\boldsymbol{v}}
\newcommand{\RR}{\mathbb{R}}
\newcommand{\argmin}{{\rm argmin}}
\newcommand{\kl}{KL}
\newcommand{\sm}{{\rm nsm}}
\renewcommand{\epsilon}{\varepsilon}
\newcommand{\Sec}[1]{\hyperref[sec:#1]{\S\ref{sec:#1}}} %section
\newcommand{\Eqn}[1]{\hyperref[eq:#1]{(\ref{eq:#1})}} %equation
\newcommand{\Fig}[1]{\hyperref[fig:#1]{Fig.\,\ref{fig:#1}}} %figure
\newcommand{\Thm}[1]{\hyperref[thm:#1]{Theorem\,\ref{thm:#1}}} %theorem
\newcommand{\Fact}[1]{\hyperref[fact:#1]{Fact\,\ref{fact:#1}}} %fact
\newcommand{\Lem}[1]{\hyperref[lem:#1]{Lemma\,\ref{lem:#1}}} %lemma
\newcommand{\Prop}[1]{\hyperref[prop:#1]{Prop.~\ref{prop:#1}}} %property
\newcommand{\Cor}[1]{\hyperref[cor:#1]{Corollary~\ref{cor:#1}}} %corollary
\newcommand{\Conj}[1]{\hyperref[conj:#1]{Conjecture~\ref{conj:#1}}} %conjecture
\newcommand{\Def}[1]{\hyperref[def:#1]{Definition~\ref{def:#1}}} %definition
\newcommand{\Alg}[1]{\hyperref[alg:#1]{Alg.~\ref{alg:#1}}} %algorithm
\newcommand{\Ex}[1]{\hyperref[ex:#1]{Ex.~\ref{ex:#1}}} %example
\newcommand{\Clm}[1]{\hyperref[clm:#1]{Claim~\ref{clm:#1}}} %example
\newcommand{\ignore}[1]{}
\begin{document}

\newcommand\relatedversion{}
\newcommand{\lr}{\texttt{LR-{\allowbreak}Structural} }

\title{\Large Classic Graph Structural Features Outperform Factorization-Based Graph Embedding
  Methods on Community Labeling}
\author{Andrew Stolman\thanks{University of California, Santa Cruz. \tt{astolman@ucsc.edu}}
\and Caleb Levy\thanks{University of California, Santa Cruz. \tt{cclevy@ucsc.edu}}
\and C. Seshadhri\thanks{University of California, Santa Cruz. {\tt{sesh@ucsc.edu}}. Supported by NSF DMS-2023495, CCF-1740850, CCF-1813165, CCF-1839317, CCF-1908384, CCF-1909790, and ARO Award W911NF1910294.}
\and Aneesh Sharma\thanks{Google. \tt{aneesh@google.com}}}

\date{}

\maketitle

\begin{abstract} {\small\baselineskip=9pt 
  Graph representation learning (also called {\em graph embeddings}) is a popular technique for incorporating network structure into machine learning models. Unsupervised graph embedding methods aim to capture graph structure by learning a low-dimensional vector representation (the {\em embedding}) for each node. Despite the widespread use of these embeddings for a variety of downstream transductive machine learning tasks, there is little principled analysis of the effectiveness of this approach for common tasks. In this work, we provide an empirical and theoretical analysis for the performance of a class of embeddings on the common task of pairwise community labeling. This is a binary variant of the classic community detection problem, which seeks to build a classifier to determine whether a pair of vertices participate in a community. In line with our goal of foundational understanding, we focus on a popular class of unsupervised embedding techniques that learn low rank factorizations of a vertex proximity matrix (this class includes methods like GraRep, DeepWalk, node2vec, NetMF).  We perform detailed empirical analysis for community labeling over a variety of real and synthetic graphs with ground truth.  In all cases we studied, the models trained from embedding features perform poorly on community labeling.  In constrast, a simple logistic model with classic graph structural features handily outperforms the embedding models. For a more principled understanding, we provide a theoretical analysis for the (in)effectiveness of these embeddings in capturing the community structure. We formally prove that popular low-dimensional factorization methods either cannot produce community structure, or can only produce ``unstable" communities. These communities are inherently unstable under small perturbations. This theoretical result suggests that even though ``good" factorizations exist, they are unlikely to be found by computational methods.}
\end{abstract}

\section{Introduction}\label{sec:Introduction}

Graph structured data is ubiquitous. Capturing the graph structure is important for a wide variety of machine learning tasks, such as
ranking in social networks, content recommendations, and
clustering~\cite{easley2010networks}. A long-studied challenge for
building such machine learned models has been to capture the graph
structure for use in a variety of modeling tasks. \emph{Graph representation
  learning}, or {\em low-dimensional graph embeddings}, provide a
convenient solution to this problem.  Given a graph $G$ on $n$
vertices, these methods map each vertex to a vector in $\RR^d$, where
$d \ll n$, in an unsupervised or a self-supervised manner (it is also
sometimes referred to as a pre-training procedure). Typically, the
goal of the embedding is to represent graph proximity by (a function of) the dot product
of vectors, thereby implicitly giving a geometric
representation of the graph.\footnote{Since there is a wide range of
  methods for Graph representation learning, we refer the reader to
  the ``Shallow embeddings'' class in a recent
  survey~\cite{MLGSurvey20} for a more comprehensive overview.} The
dot product formulation provides a convenient form for building a
models (e.g. using deep learning). Moreover, the geometry of
the embedding allows efficient reverse-index lookups, using nearest neighbor search~\cite{covington2016deep,Twitter-embeddings}.

The
study of low-dimensional graph embeddings is an incredibly popular
research area, and has generated many exciting results over the past
few years (see surveys~\cite{HaYi18, MLGSurvey20} and a Chapter 23 in~\cite{pml1Book}). Nonetheless, there is 
limited 
principled understanding of the power of low-dimensional
embeddings (a few recent papers address this topic~\cite{SeSh20,CMST20,Loukas20,GJJ20}). 
Our work aims
to understand the effectiveness of a class of graph embeddings in
preserving graph structure as it manifests in performance on different downstream tasks.

Due to an explosion of interest in the area, there are by now a large
class of graph embedding methods~\cite{pml1Book}. For the sake of a principled study,
we focus on the important class of \emph{unsupervised low-rank factorization methods}.
While there do exist many methods outside this class, such factorization methods
cover a number of popular and influential
embeddings methods, including GraRep~\cite{OuCu16},
DeepWalk~\cite{PeAl14}, and Node2Vec~\cite{GrLe16}.  In fact, a recent result
shows that many existing embedding techniques can be recast as matrix
factorization methods~\cite{QiDo18}.  One begins with an $n \times n$
promixity matrix $M$, typically the adjacency matrix, the random walk
matrix, or some variant thereof (e.g. Node2Vec uses the matrix for a certain second order random walk). Using optimization techniques, the
matrix $M$ is approximated as a Gramian matrix $V^TV$, where
$V \in \RR^{d \times n}$. The column vectors of $V$ are the embeddings
of the vertices. In direct factorizations, one simply tries to
minimize $\|V^TV - M\|_2$. More sophisticated softmax factorizations
perform non-linear entry-wise transformations on $V^TV$ to approximate
$M$. This class of unsupervised embedding methods is among the most popular and prevalent low-dimensional graph
embeddings, and hence this is a particularly
useful class to quantify performance for.

Our aim is to study a natural question, albeit one that is somewhat challenging to pose
formally: \emph{to what extent do factorization-based embedding methods capture
graph structure relevant to downstream ML tasks?}

To this end, we fix the following well-defined \emph{pairwise
  community labeling} problem. Given two vertices $i$ and $j$, the
binary classification task is to determine whether they belong to the
same community. We note that this community labeling problem is an
instance of a broad range of community detection problems that have a
long history of study in the graph mining
literature~\cite{mmds_book}. We then attempt a rigorous theoretical
and empirical understanding of the performance of factorization-based
graph embeddings on community labeling.

We note that there are graph embedding methods that are not
factorization based (e.g. GraphSage~\cite{HaYi17}) as well as
factorization methods that do not use direct or softmax
factorizations~\cite{CMST20}, as well as GCNs and GNNs~\cite{KW17,
MLGSurvey20}. For the sake of a principled study, we chose a 
well-defined subclass of methods that covered many important
methods such as {\tt GraRep}~\cite{CaLu15}, {\tt DeepWalk}~\cite{PeAl14}, {\tt Node2Vec}~\cite{GrLe16}, and {\tt NetMF}~\cite{QiDo18}. Moreover, the recent
{\tt NetMF} algorithm shows how many past methods can be recast
as factorization methods.

One central promise of unsupervised graph embedding methods is to
preserve network structure in the geometry that can then be useful for
downstream tasks. In our work, we focus on the
task of community labeling both because it is directly connected to the core
question: how well do embeddings preserve the graph
structure? 

\subsection{Formal description of setting}\label{sec:setting}

We formally describe the graph embeddings techniques that are
studied in this work.  The learned factorization approach is to
approximate $M$ by the Gramian matrix $V^TV$ (the matrix of dot
products). Typically, this matrix $V$ is found by formulating a
machine learning problem, which has a loss function that minimizes a
distance/norm between $V^TV$ and $M$. Broadly speaking, we can
classify these methods into two categories:

\begin{asparaitem}
    \item {\bf Direct factorizations:} Here, we set $V$ as \\$\argmin_V \|V^TV - M\|_2$, where $M$ is typically
    (some power of) the graph adjacency matrix. Methods such as Graph Factorization, GraRep~\cite{CaLu15},
    and HOPE~\cite{OuCu16} would fall under this category.
    \item {\bf Softmax factorizations:} These methods factorize a stochastic matrix, such as (powers of)
    the random walk matrix. Since $V^TV$ is not necessarily stochastic, these methods apply the softmax
    to generate a stochastic matrix. Notable examples are such methods are DeepWalk~\cite{PeAl14} and Node2vec~\cite{GrLe16}.
    Formally, consider the normalized softmax matrix $\sm(V)$ given by 
% Our aim is to show that
% embedding methods based on the softmax function must have high rank. Formally,
% the embedding method produces vectors $\vec{v}_1, \vec{v}_2, \ldots,
% \vec{v}_n$. Let $V$ denote the matrix where these vectors are columns. Methods
% like Deepwalk and node2vec model $M$ as follows. Consider the Gram matrix
% $V^TV$, and define matrix $\tM$ where
\begin{equation}
\sm(V)_{ij} = \frac{\exp(\vec{v}_i \cdot \vec{v}_j)}{\sum_k \exp(\vec{v}_i \cdot \vec{v}_k)}
\end{equation}
Note that $\sm(V)$ is stochastic by construction. The objective of the
learning problem is to minimize 
$\kl(\sm(V), M)$, which is the sum of row-wise KL-divergence between the rows (this is equivalent to cross-entropy loss).
\end{asparaitem}

The recent {\tt NetMF}~\cite{QiDo18} method interpolates between these categories and shows
that a number of existing methods can be expressed as factorization methods, especially
of the above forms.
For this study, we only focus on the above two category of
unsupervised embedding methods. (We discuss this choice and other methods
in \Sec{just}.)

\begin{figure*}[h]
\begin{tabular}{c c c}
	\includegraphics[width=.33\linewidth]{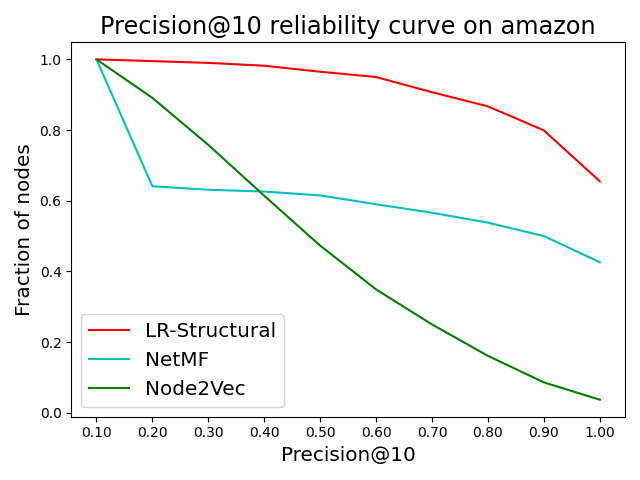} &
	\includegraphics[width=.33\linewidth]{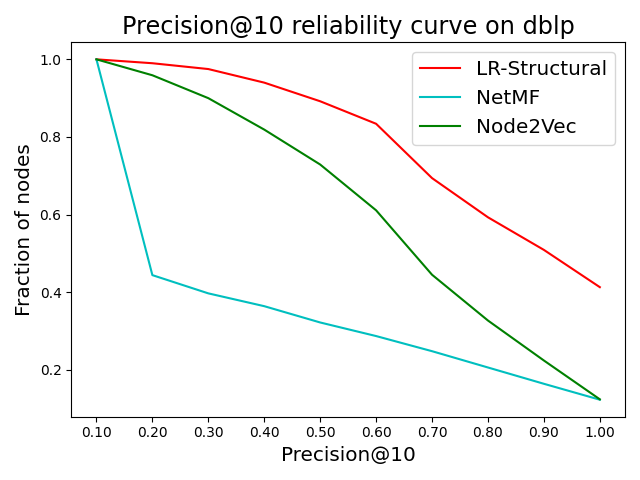}  &
	\includegraphics[width=.33\linewidth]{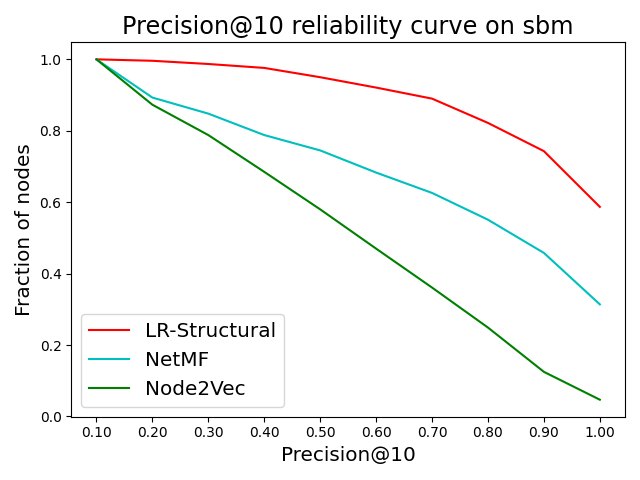}
\end{tabular}
\caption{Each point, $(x, y)$, on the curve represents the approximate fraction of vertices, $y$, for
	which the given method produces a precision@10 score of at least $x$. \lr is plotted against
the two best performing embedding methods. $1000$ vertices are sampled and for each vertex $v$ sampled,
the vertices of the graph
$u_1, \ldots, u_n$, are ordered by decreasing score assigned by the given classifier.
The precision@10 is the fraction of $u_1, \ldots, u_{10}$ which share a community with $v$.
}
\label{fig:p10-curves}
\end{figure*}

{\bf Empirical setup:} We empirically investigate performance of the
above methods on graphs where the ground truth communities correlate well with graph structure. 
In the case of real data, the ground truth is provided by the existing
community labels. With synthetic data, we explicitly construct
stochastic block models with well-defined communities. For every pair
of vertices $i, j$, the prediction problem is to determine whether
they belong to the same community (note that they may belong together
in multiple communities, but we do not require the community label
itself to be determined; just whether they belong in {\em any}
community together).

For both real and simulated data, we note that the ground truth is
sparse, i.e. the vast majority of node pairs do not belong to the same
community. Hence, it is appropriate to measure the prediction
performance using precision-recall curves for this highly imbalanced
label distribution~\cite{DG06}. Consistent with
most of the literature on graph embeddings and the applications that
often require nearest neighbor lookups, the main feature we use for
prediction is the value of the dot product~\cite{covington2016deep,MLGSurvey20}. 
% The embeddings are based
% on the Gramian (dot product) matrix $V^TV$, and thus, it is natural to
% see the extent to which the dot product predicts the community
% structure. This measures
% the ability of geometric similarity (dot product) to predict community similarity.
% We note that the dot product is commonly used in
% previous work on graph embeddings~\cite{MLGSurvey20}. 

{\bf Theoretical setup:} In order to analyze the performance, we
provide an abstraction of community structure from a matrix
standpoint: this can intuitively be thought of as having many dense
blocks in an overall sparse matrix. We then attempt to quantify "how
much" community structure can be present in a matrix $V^TV$ or
$\sm(V)$, for \emph{any} matrix $V \in \RR^{d\times n}$ (for
$d \ll n$). This formulation captures the fundamental notion of
a low-rank factorization, without referring to any specific method
to compute it. Our results hold for \emph{any} direct/softmax factorization method,
regardless of how the embedding is computed. Our formulation theoretically
investigates whether it is even possible to recreate community
structure using a low rank factorization of the form $V^TV$ or
$\sm(V)$.

\subsection{Main results}

All the graph embeddings methods we tested
(including {\tt GraRep}, {\tt DeepWalk}, {\tt Node2Vec}, and {\tt
  NetMF}) perform poorly on the community labeling task, and are
handily out-performed by a baseline logistic model \lr{} built using just
four classic graph structural features\footnote{The features for a
  node pair $(u,v)$ are: 1) Personalized-PageRank (PPR) score from $u$ to $v$, 2)
  PPR from $v$ to $u$, 3) cosine similarity among
  $N(u)$ and $N(v)$ (where $N(\cdot)$ denotes a node's neighborhood),
  and 4) the size of the cut separating $N(u)$ and $N(v)$}. We observe
the same outcome for a series of experiments on real data and
synthetic data.  Motivated by this empirical finding, we provide a
mathematical explanation for this result by providing a theorem which
shows that the community structure exhibited by softmax factorizations
is unstable under small perturbations of the embedding vectors.

{\bf Evaluations on real data:} In our experiments,
not only do we see poor {\em absolute} performance on the community
labeling task, but a baseline \lr{} based on ``classic'' graph features
handily outperforms the embeddings. We see this difference not only in
an overall manner, but across individual nodes in the graph. In
particular, Figure~\ref{fig:p10-curves} shows a ``reliability plot''
for precision@10 for a set of 1000 nodes sampled randomly from each of
the graphs. To produce this chart, we first randomly sampled 1000 nodes,
each of which has at least 10 neighbors in the same community. Then,
each of these vertices selects their top-10 predictions for a
same-community neighbor using a model, producing a distribution over
precision@10. Then for each model, one can produce a reliability plot
(see Figure~\ref{fig:p10-curves}) that produces points $(x,y)$: for any given value $x$ for
precision@10, define $y$ as the fraction of nodes that have
precision@10 of at least $x$. Thus,
we'd expect that a highly accurate model would have an almost flat curve.
The results in
Figure~\ref{fig:p10-curves} show that generally \lr{} can produce high
accurate ($\textrm{precision@k}\ge 0.7$) community labels for 
$20$--$40\%$ more nodes than the embedding-based models.
This suggests that the embeddings
can retrieve very few of the community neighbors, while classic graph
features used in \lr{} (PPR scores, neighborhood similarity) recover much of the community structure.

We emphasize that our empirical analysis is more nuanced than the more
common aggregated measurement (such as via the
AUC-ROC~\cite{MLGSurvey20}) as it measures the performance across
individual nodes in the graph. We believe this individual measurement
is more reflective of our goal, and also, as the label distribution
(which pairs co-occur in communities) is highly imbalanced, a P/R-like
curve provides a more useful measure than the ROC curves. The latter
can be misleading as an ROC curve can still look quite good while
misclassifying much of the minority class.~\cite{DG06}.

For completeness, we also perform experiments with regression models using the Hadamard product of the
two vectors. Again, the embeddings gives results of similar quality,
showing that linear functions (of the embedding vectors) fail to predict
community structure. Refer to Section~\ref{sec:data} for more details.

{\bf Evaluations on SBMs:} To further investigate this phenomenon,
we also generated synthetic graph instances
using Stochastic Block Models (SBMs) that have a very simple planted community structure. 
For example, we create a graph with $n = 10^5$ vertices and blocks of size $20$. The edge
density inside a block is $0.3$ and we connect the blocks by a sparse Erd\H{o}s-R\'{e}nyi graph
such that the average number of links within a block is equal to those that go between blocks.
We vary the overall edge density (while keeping the ratio between inter-block and intra-block edges constant)
and study the precision@10 scores. We observe that while the performance across methods tends to increase
with density, \lr{} still outperforms them.

{\bf Theoretical explanation:} We provide a formulation for what it means for a matrix
to exhibit community structure. We emphasize that this is not meant to
completely capture the challenging notion of communities (which has a
deep and rich history), but rather to give us some formal framework to state our impossibility 
results. Intuitively, an overall sparse matrix/graph has community structure if a non-trivial
fraction of the rows/vertices participate in small dense blocks of
entries. We then investigate
when $V^TV$ and $\sm(V)$ exhibit community structure. First, we prove that for $d \ll n$,
$V^TV$ cannot exhibit community structure, which provides a principled
justification for the empirical observation that direct factorization methods perform
poorly across all real and synthetic instances.

Interestingly, we also show that despite the negative result for
truncated-SVD in~\cite{SeSh20}, softmax factorizations \emph{can}
exhibit community structure similar to those given by the SBMs
discussed earlier. Recent work shows that threshold based sign factorizations can also embed 
such structure~\cite{CMST20}.
But we prove
that this community structure is fundamentally unstable under small
perturbations of the vectors obtained from softmax
factorizations. Meaning, if we take any matrix $V$ such that $\sm(V)$
has community structure, then with high probability,
$\sm(\widetilde{V})$ does not have such a structure (where
$\widetilde{V}$ is a slight random perturbation of $\sm(V)$).

This strongly suggests that optimization methods cannot produce low dimensional matrices $V$
where $\sm(V)$ has community structure. This theorem provides a mathematical understanding
of the limitations of softmax factorizations.
We do note that since softmax factorization are superior to direct factorizations,
since they avoid the direct impossibility for the latter.

\subsection{Related Work}\label{sec:just}

We briefly note a few important representation learning techniques
that are beyond the scope of our work. The most prominent among these
is the Graph Neural Networks such
as~\cite{HaYi17,KW17,VGCRLB18,HLGZLPL20}, which can be thought of as a
class of learned message passing methods. We refer the reader
to a nicely interpretable classification of these methods in a recent
survey~\cite{MLGSurvey20}. The factorization methods we study fall
under the "Shallow embeddings" classification there. 
There are several recent
works~\cite{GJJ20,Loukas20,XHLJ19} that theoretically study the power
of GNNs, but this is complementary to our work since we study a
different class of methods.

A recent result shows the inability of low-dimensional SVD based embeddings
of preserving the triangle structure of real-world networks~\cite{SeSh20}.
A followup showed that these impossibility results can be circumvented
by alternate embedding methods~\cite{CMST20}.
Our result can be thought of as a deeper investigation into this issue.
First, we look at a class of factorization methods subsuming those used in practice.
Secondly, we also focus on a specific downstream ML task, unless previous results
that focus solely on the graph structure.

\section{Mathematical results and interpretation} \label{sec:math}

We define a simple abstraction of community structure in a matrix $M$.
Then, we try to quantify how much community structure Gram matrices and softmax factorizations
can possess. We will state these as formal theorems, which are our main mathematical
result. The full proofs are given in Section~\ref{sec:proof}.

Let us start with an $n \times n$ matrix $M$ that represents the ``similarity" or likelihood of connection
between vertices. For convenience, let us normalize so that the $\forall i \in [n], \sum_{j \leq n} M_{i,j} \leq 1$.
(So the sum of similarities of a vertex is at most $1$.) A communities is essentially
a dense block of entries, which motivates the following definition. 
We use $\eps$ to denote a parameter for the threshold of community strength. One should think
of $\eps$ as a small constant, or something slowly decreasing in $n$ (like $1/\poly(\log n)$).

\begin{Definition} \label{def:comm} A pair of vertices $(i,j)$ is a 
\emph{potential community pair}
if both $M_{ij}$ and $M_{ji}$ are at least $\eps$.
\end{Definition}

Note that we do not expect all such pairs $(i,j)$ to truly be together in a community.
Hence, we only consider such a pair a potential candidate.
We expect community relationships to be mutual, even if the matrix $M$ is not. A community
can be thought of as a submatrix where at least a constant fraction of pairs are potential community
pairs. For our purposes, we do not need to further formalize. It is natural to expect
that $\Theta(n)$ pairs are community pairs; indeed, most vertices should participate
in communities, and will have at least a constant number of community neighbors. 
Our mathematical analyses shows that direct and softmax factorizations cannot produce
these many potential community pairs. 

{\bf Lower bound for direct factorizations:} We first show a strong lower bound for direct factorizations. We prove that the number
of potential community pairs in $V^TV$ is linear in the rank, and thus, a low-dimensional
factorization cannot capture community structure. The key insight is to use
the rotational invariance of Frobenius norms.

\begin{theorem} \label{thm:direct} Consider any matrix $V \in \RR^{d \times n}$
such that row sums in $V^TV$ have absolute value at most $1$. Then $V$ has at most
$d/2\eps^2$ potential community pairs.
\end{theorem}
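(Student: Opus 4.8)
The plan is to reduce the whole statement to the single spectral estimate $\norm{V^TV}_F^2 \le d$ and then count. Write $M = V^TV$ and let $\vec v_1,\dots,\vec v_n \in \RR^d$ be the columns of $V$, so $M_{ij} = \vec v_i\cdot\vec v_j$ and $M$ is symmetric positive semidefinite of rank at most $d$. Because $M$ is symmetric, $M_{ij}=M_{ji}$, so a pair $(i,j)$ with $i\ne j$ is a potential community pair exactly when $M_{ij}\ge\eps$. Each such unordered pair contributes $M_{ij}^2\ge\eps^2$ in both orderings to the Frobenius norm, so if $P$ is the number of potential community pairs then
\[
\norm{M}_F^2 \;=\; \sum_{i,j} M_{ij}^2 \;\ge\; \sum_{i\ne j} M_{ij}^2 \;\ge\; 2P\eps^2 .
\]
Hence it suffices to show $\norm{M}_F^2 \le d$, whereupon $P \le d/(2\eps^2)$ is immediate.

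Second, I would prove $\norm{M}_F^2\le d$ using precisely the rotational invariance of the Frobenius norm together with the rank constraint. Since $M$ is symmetric PSD, diagonalize $M = Q\Lambda Q^T$ with $\Lambda=\mathrm{diag}(\lambda_1,\dots,\lambda_n)$, all $\lambda_k\ge 0$, and at most $d$ of them nonzero because $\rank M\le d$. Orthogonal invariance of the Frobenius norm gives $\norm{M}_F^2 = \norm{\Lambda}_F^2 = \sum_k \lambda_k^2$, and bounding each eigenvalue by the largest,
\[
\norm{M}_F^2 \;=\; \sum_k \lambda_k^2 \;\le\; d\,\lambda_{\max}^2 \;=\; d\,\norm{M}_2^2 .
\]
So the problem collapses to establishing the operator-norm bound $\norm{M}_2\le 1$.

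Third, I would extract $\norm{M}_2\le 1$ from the row-sum hypothesis. For a symmetric matrix one has $\norm{M}_2 \le \norm{M}_\infty = \max_i \sum_j \abs{M_{ij}}$, so it is enough that every row of $M$ has $\ell_1$-norm at most $1$; combined with the display above this yields $\norm{M}_F^2\le d$ and completes the count. I expect this last step to be the genuine crux, and the place where the hypothesis must be read with care. The two earlier inequalities are routine, but the passage from ``row sums'' to the operator norm really needs the row $\ell_1$-norms $\sum_j\abs{M_{ij}}$ to be controlled — which is automatic here because the proximity matrices in question are \emph{non-negative}, so the (signed) row sum coincides with the $\ell_1$ row norm. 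Merely bounding the signed sums $\abs{\sum_j M_{ij}}$ would not suffice: a rank-one configuration splitting the vertices into equal $\pm$ groups has every signed row sum equal to $0$ yet $\Theta(n^2)$ potential community pairs, so the non-negativity (equivalently the $\ell_1$) reading is what makes the theorem true. Pinning this down is the main obstacle; everything else follows mechanically from the Frobenius/eigenvalue calculation above.
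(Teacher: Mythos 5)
Your proof is correct and follows essentially the same route as the paper's: bound the spectral radius of $M = V^TV$ using the row-sum hypothesis, combine rank $\le d$ with the eigenvalue expression of the Frobenius norm to get $\norm{M}_F^2 \le d$, and count a contribution of $2\eps^2$ per potential community pair. Your closing caveat is also well taken, and is in fact sharper than the paper's own treatment: the paper's appeal to Gershgorin silently requires the $\ell_1$ row norms $\sum_j \abs{M_{ij}}$ to be at most $1$ (equivalently, non-negativity of the entries together with the stated row-sum bound), while the hypothesis as literally written only controls the signed sums $\abs{\sum_j M_{ij}}$; your rank-one example with equal $\pm$ groups shows the signed-sum reading makes the statement false, so the $\ell_1$/non-negative reading you adopt is the one under which both your argument and the paper's are valid.
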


% \Sesh{need to move proof?? just writing it here for now}
\begin{proof} Since $V^TV$ has row sums of absolute value at most $1$, the
spectral radius (largest absolute value of eigenvalue) is also at most $1$. 
(This can be proven directly, but it also a consequence of the Gershgorin circle theorem~\cite{Ger}.)
Since the rank of $V^TV$ is at most $d$, $V^TV$ has at most $d$ non-zero eigenvalues.
We can express the Frobenius norm squared, $\|V^TV\|^2_2$, by the sums of squares
of eigenvalues. By the arguments above, $\|V^TV\|^2_2 \leq d$. 

Note that $\|V^TV\|^2_2$ is also the sums of squares of entries. Each potential community pair
contributes at least $2\eps^2$ to this sum. Hence, there can be at most $d/2\eps^2$
potential community pairs.
\end{proof}

{\bf The instability of softmax factorizations:} 
The properties of softmax factorizations are more nuanced. Firstly, we can prove
that softmax factorizations \emph{can} represent community structure quite effectively.

\begin{restatable}{theorem}{softmaxpos}\label{thm:softmaxpos} For $d = O(\log n)$, there exists $V \in \mathbb{R}^{d \times n}$ such
that $\sm(V)_{ij}$ exhibits community structure. Specifically, for any natural number $b \leq n$,
there exists $V \in \mathbb{R}^{d \times n}$  such that $\sm(V)$ has $n/b$ blocks
of size $b$, such that all entries within blocks are at least $1/2b$.
\end{restatable}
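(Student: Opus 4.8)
The plan is to realize the claimed block structure by collapsing each block onto a single direction and exploiting the exponential sensitivity of the softmax to keep distinct blocks separated. Concretely, I would partition $[n]$ into $m = n/b$ blocks $B_1,\dots,B_m$ of size $b$, assign each block $B_\ell$ an \emph{anchor} unit vector $\vec{u}_\ell \in \RR^d$, and set $\vec{v}_i = c\,\vec{u}_\ell$ for every $i \in B_\ell$, where $c>0$ is a scale chosen below. With this assignment the inner products take only two values: $\vec{v}_i \cdot \vec{v}_j = c^2$ whenever $i,j$ lie in the same block, and $\vec{v}_i \cdot \vec{v}_j = c^2 \langle \vec{u}_\ell, \vec{u}_{\ell'}\rangle$ across blocks. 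The entire argument then reduces to making the same-block exponent $c^2$ dominate the $n-b$ competing cross-block terms in each softmax denominator.

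The first ingredient is a supply of nearly orthogonal anchors in low dimension. I would invoke the standard fact that for $d = O(\log n)$ (with a sufficiently large constant) there exist $m \le n$ unit vectors $\vec{u}_1,\dots,\vec{u}_m \in \RR^d$ with $\langle \vec{u}_\ell,\vec{u}_{\ell'}\rangle \le 1/2$ for all $\ell\neq\ell'$. This follows from a routine probabilistic argument: random sign vectors $\vec{u}_\ell \in \{\pm 1/\sqrt d\}^d$ (or random spherical vectors) have mean-zero pairwise inner products that concentrate, so a Hoeffding bound together with a union bound over the $\binom{m}{2}$ pairs shows that $\exp(\Omega(d)) \ge n$ such vectors coexist once $d = \Theta(\log n)$.

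It remains to fix the scale. For a fixed vertex $i \in B_\ell$ and any $j$ in the same block, the softmax entry satisfies
\[
\sm(V)_{ij} \;=\; \frac{\exp(c^2)}{\sum_k \exp(\vec{v}_i \cdot \vec{v}_k)} \;\ge\; \frac{\exp(c^2)}{b\exp(c^2) + (n-b)\exp(c^2/2)},
\]
where I used the anchor separation to bound each cross-block term by $\exp(c^2/2)$. Requiring the right-hand side to be at least $1/(2b)$ rearranges to $\exp(c^2/2) \ge (n-b)/b$, so taking $c = \sqrt{2\ln(n/b)}$ (hence $c^2/2 = \ln(n/b)$) makes every within-block entry, including the diagonal, at least $1/(2b)$, as required. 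Note that each row of $\sm(V)$ automatically sums to $1$, so the normalization convention preceding \Def{comm} holds for free.

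The only genuine obstacle is the tension between the two requirements: we must fit $m = \Theta(n)$ well-separated anchors into merely $O(\log n)$ dimensions, yet we also need a large exponent gap to drown out the $\Theta(n)$ cross-block terms. The observation that resolves this is that a \emph{constant} separation ($\le 1/2$) suffices on both fronts simultaneously: constant separation is exactly the regime in which $\exp(\Omega(d))$ anchors fit in dimension $d$, and because the softmax is exponential, a multiplicative gap of $\exp(c^2/2)$ in the exponent already beats $n$ competing terms as soon as $c^2 \gtrsim \log n$. Everything else is a direct calculation.
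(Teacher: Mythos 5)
Your proposal is correct and follows essentially the same route as the paper: both collapse each block onto a single vector, use a probabilistic argument to obtain $n/b$ nearly-orthogonal directions in $d = O(\log n)$ dimensions (you via random sign vectors and Hoeffding, the paper via random Gaussian vectors of length $2\sqrt{\ln n}$), and then verify the $1/2b$ bound by splitting each softmax denominator into $b$ within-block terms and at most $n$ cross-block terms. The only differences are cosmetic: your explicit constant-separation normalization ($\langle \vec{u}_\ell,\vec{u}_{\ell'}\rangle \le 1/2$, scale $c=\sqrt{2\ln(n/b)}$) versus the paper's direct tail bound on Gaussian dot products.
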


Indeed, this covers the various SBM settings we study, and demonstrates the superiority of
softmax factorizations for modeling community structure. We note that a similar theorem
(for a different type of factorization) was proved in~\cite{CMST20}.

On the other hand, we prove that these factorizations are highly \emph{unstable} to small
perturbations. Indeed, with a tiny amount of noise, any community pair can be destroyed with
high probability.

Formally, our noise model is as follows. Let $\delta > 0$ be a noise parameter.
Think of the $i$th column of $V$ as the $d$-dimensional vector $\vec{v}_i$, which
is the embedding of vertex $i$.
For every vector $\vec{v}_i$, we generate an independent random Gaussian $X_i \sim \cN(0,\delta^2)$
and rescale $\vec{v}_i$ as $(1+X_i)\vec{v}_i$ (formally, we rescale to $e^{X_i}\vec{v}_i$,
to ensure that the scaling is positive).
We denote this perturbed matrix as $\pV{\delta}$. We think of $\delta$ as a quantity going to zero, as $n$ becomes large.
(Or, one can consider $\delta$ as a tiny constant.) 

\begin{restatable}{theorem}{perturb} \label{thm:perturb} Let $c$ denote some absolute positive constant.
Consider any $V \in \mathbb{R}^{d \times n}$. 
For any $\delta > c\ln(1/\eps)/\ln n$, the following holds in $\sm(\pV{\delta})$. 
For at least $0.98n$ vertices $i$,
for any pair $(i,j)$, the pair is \emph{not} a potential community pair
with probability at least $0.99$.
\end{restatable}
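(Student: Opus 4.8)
The plan is to reduce the event that $(i,j)$ is a potential community pair to a geometric closeness condition on the \emph{perturbed} embeddings, and then to show this condition fails with probability at least $0.99$ for all but a $0.02$ fraction of vertices. Write $\tilde v_i = e^{X_i}\vec v_i$ for the $i$th column of $\pV{\delta}$. First I would extract a clean necessary condition. The diagonal term $\exp(\tilde v_i\cdot\tilde v_i)=\exp(\|\tilde v_i\|^2)$ is one of the summands in the denominator of $\sm(\pV{\delta})_{ij}$, so $\sm(\pV{\delta})_{ij}\le\exp(\tilde v_i\cdot\tilde v_j-\|\tilde v_i\|^2)$, whence $\sm(\pV{\delta})_{ij}\ge\eps$ forces $\|\tilde v_i\|^2-\tilde v_i\cdot\tilde v_j\le\ln(1/\eps)$. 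Applying the same bound to $\sm(\pV{\delta})_{ji}\ge\eps$ and adding the two inequalities gives
\begin{equation}
\|\tilde v_i-\tilde v_j\|^2=\|\tilde v_i\|^2+\|\tilde v_j\|^2-2\,\tilde v_i\cdot\tilde v_j\le 2\ln(1/\eps).
\end{equation}
Thus a potential community pair requires the perturbed embeddings to lie within $\sqrt{2\ln(1/\eps)}$; in particular their norms satisfy $\big|\,\|\tilde v_i\|-\|\tilde v_j\|\,\big|\le\sqrt{2\ln(1/\eps)}$.

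The second step is an anti-concentration argument on the perturbed norms. Here $\|\tilde v_i\|=e^{X_i}\|\vec v_i\|$ is log-normal, $\ln\|\tilde v_i\|\sim\cN(\ln\|\vec v_i\|,\delta^2)$, so its density is bounded everywhere by $O\big(1/(\delta\|\vec v_i\|)\big)$. Since $\|\tilde v_j\|$ is independent of $\|\tilde v_i\|$, conditioning on $\|\tilde v_j\|$ and integrating gives, for any fixed $j$,
\begin{equation}
\Pr\Big[\big|\,\|\tilde v_i\|-\|\tilde v_j\|\,\big|\le\sqrt{2\ln(1/\eps)}\Big]=O\!\left(\frac{\sqrt{\ln(1/\eps)}}{\delta\,\|\vec v_i\|}\right).
\end{equation}
Call $i$ \emph{heavy} if $\|\vec v_i\|\ge K\sqrt{\ln(1/\eps)}/\delta$ for a large absolute constant $K$. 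For heavy $i$ the displayed probability is at most $0.01$, so by the necessary condition above, for every $j$ the pair $(i,j)$ fails to be a potential community pair with probability at least $0.99$ — exactly the conclusion for heavy vertices. It is here that the hypothesis $\delta>c\ln(1/\eps)/\ln n$ enters: it guarantees the heaviness threshold $K\sqrt{\ln(1/\eps)}/\delta$ is only $O(\log n)$.

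The remaining, and main, obstacle is to show that at most $0.02n$ vertices are \emph{light} (fail to be heavy) among those that can actually sustain a community entry. Anti-concentration is vacuous for light vertices, so a purely probabilistic bound is impossible; indeed one can hand-build low-norm vertices whose softmax row has an entry $\ge\eps$ that survives the perturbation. The observation I would exploit is dimensional: for a light vertex $i$ to keep some $\sm(\pV{\delta})_{ij}\ge\eps$, the denominator $Z_i=\sum_k\exp(\tilde v_i\cdot\tilde v_k)$ must stay small relative to $\exp(\tilde v_i\cdot\tilde v_j)$, which forces all but $\poly(1/\eps)$ of the inner products $\tilde v_i\cdot\tilde v_k$ to be substantially negative (or, at larger light norms, the directions to be sharply near-orthogonal at a scale set by the norm). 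Two light community vertices therefore have negatively correlated (or near-orthogonal-at-scale) embeddings, and a configuration in $\RR^d$ with $d=O(\log n)$ can support only $o(n)$ such mutually negatively correlated directions. Making this quantitative — carefully matching the norm scale, the required degree of negativity, and the available dimension so that the light community vertices number at most $0.02n$ — is the crux, and the step I expect to be the most delicate; the necessary condition and the log-normal anti-concentration are comparatively routine.
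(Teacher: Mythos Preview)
Your necessary condition $\|\tilde v_i-\tilde v_j\|\le\sqrt{2\ln(1/\eps)}$ is correct but too weak, and this is the root of the difficulty you run into. With only a constant-width closeness window, anti-concentration of the log-normal norm requires $\|\vec v_i\|\ge K\sqrt{\ln(1/\eps)}/\delta$, which for $\delta\approx c\ln(1/\eps)/\ln n$ is of order $\ln n/\sqrt{\ln(1/\eps)}$. No hypothesis of the theorem forces norms that large: the paper's own positive construction (\Thm{softmaxpos}) has all norms equal to $2\sqrt{\ln n}$, so \emph{every} vertex is light in your sense, yet $\sm(V)$ has full community structure. Your proposed dimensional argument for light vertices is therefore not a detail to be filled in --- it would have to handle this construction --- and moreover the theorem places no bound on $d$, so an argument that leans on $d=O(\log n)$ is simply unavailable. (Your specific claim that two light community vertices must be negatively correlated is also false in that construction, where vertices in the same block are identical.)

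The idea you are missing is that the row-stochasticity of $\sm$ gives a much sharper closeness condition. Since $\sum_k \sm(V)_{ik}=1$, by Markov at least $0.99n$ indices $k$ have $\sm(V)_{ik}\le 100/n$; taking the ratio with $\sm(V)_{ij}\ge\eps$ and logs yields $\vec v_i\cdot(\vec v_j-\vec v_k)\ge\ln(\eps n/100)$, and Cauchy--Schwarz then forces one of the norms to be at least $\tfrac12\sqrt{\ln(\eps n/100)}$. A short case analysis upgrades this to: for at least $0.99n$ vertices $i$, every community pair $(i,j)$ has $\|\vec v_i\|\ge\tfrac12\sqrt{\ln(\eps n/100)}$. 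Now feed this back into your own diagonal-term inequality $\vec v_i\cdot\vec v_j\ge\|\vec v_i\|^2-\ln(1/\eps)$: dividing by $\|\vec v_i\|$ gives $\|\vec v_j\|\ge\|\vec v_i\|-\ln(1/\eps)/\|\vec v_i\|$, so the norm gap is at most $2\ln(1/\eps)/\sqrt{\ln(\eps n/100)}$ --- a window of width $o(1)$, not $\Theta(1)$. Apply this lemma to $\pV{\delta}$ as well; by a union bound $0.98n$ vertices are good for both matrices. For such $i$, the perturbed norm $\|\tilde v_j\|=e^{X_j}\|\vec v_j\|$ must land in an interval whose logarithmic width is $O\!\big(\ln(1/\eps)/\ln(\eps n/100)\big)$, i.e.\ at most $O(1/c)$ standard deviations of $X_j$ once $\delta>c\ln(1/\eps)/\ln n$. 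That is exactly what makes the anti-concentration go through with the stated $\delta$, and it eliminates the heavy/light dichotomy entirely.
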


Thus, with overwhelming probability, any community structure in $\sm(V)$ is destroyed by adding
$o(1)$ (asymptotic) noise. This is strong evidence that either noise in the input or numerical
precision in the final optimization could lead to destruction of community structure.
These theorems give an explanation of the poor performance of the embeddings methods studied.

\subsection{Proof ideas} \label{sec:proofidea}

In this section, we lay out the main
ideas in proving \Thm{perturb}. It helps to begin with the upper bound construction
of \Thm{softmaxpos}. Quite simply, we take $n/b$ random Gaussian vectors, and
map vertices in a community/block to the same vector. After doing some calculations
of random dot products, one can deduce that the vectors need to have length $\Omega(\sqrt{\ln n})$
for the construction to go through. By carefully look at the math, one also finds
that the construction is ``unstable". Even perturbing the vectors within a block slightly
(so that they are no longer the same vector) affects the block structure. We essentially
prove that these properties hold for \emph{any} set of vectors.

We outline the proof of \Thm{perturb}.
Suppose $\sm(V)_{ij} > \eps$. Note that $\sum_{k \in [n]} \sm(V)_{ik} = 1$, since
$\sm(V)$ is normalized by construction. Thus, there exists some $k$ such
that $\sm(V)_{ik} \leq 1/n$. We deduce that $\sm(V)_{ij}/\sm(V)_{ik} > \eps n$.
Writing out the entries and taking logs, this implies $\vec{v}_i \cdot \vec{v}_j - \vec{v}_i \cdot \vec{v}_k > \ln(\eps n)$. Therein lies the power (and eventual instability) of softmax 
factorizations: ratios of entries are transformed into differences of dot products. By
Cauchy-Schwartz,
one of $\vec{v}_i, \vec{v}_j, \vec{v}_k$ must have length $\Omega(\sqrt{\ln n})$ (ignoring
$\eps$ dependencies). By an averaging argument, we can conclude that for a vast majority
of community pairs $(i,j)$, $\|\vec{v}_i\|_2 = \Omega(\sqrt{\ln n})$. Note that
both $\sm(V)_{ij}$ and $\sm(V)_{ji}$ must be at least $\eps$; these quantities
have the same numerator, but different denominators. By analyzing these expressions
and some algebra, we can prove that $\Big|\|\vec{v}_i\|_2 - \|\vec{v}_j\|_2 \Big| = o(1/\sqrt{\ln n})$.

Thus, we discover the key property of communities expressed by softmax factorizations.
Vectors with a community have length $\Omega(\sqrt{\ln n})$, but the differences
in lengths must be $O(1/\sqrt{\ln n})$. Asymptotically, this is unstable to perturbations
in the length. A vanishingly small change in the vector lengths can destroy the community.

\section{Empirical verification}\label{sec:data}

\subsection{Community pair prediction}

We study the performance of matrix factorization embeddings at identifying community structure in a graph with many
possibly overlapping communities. Formally, we state this as a binary classification task over pairs of
vertices.  Every dataset consists of a graph, $G$, and a set of (possibly overlapping) communities,
$C_1, C_2, \ldots$.  This gives us a ground truth labeling over the pairs from $V \times V$ where positive
instances are those $(u, v)$ such that $u, v \in C_k$, for some
community $C_k$. We evaluate the performance of 
embedding techniques on this binary classification task over $V \times
V$ as follows:
\begin{asparaenum}
	\item An embedding method is applied to $G$ to obtain an embedding $\bv_1, \ldots, \bv_n$ of the nodes in
		$V$.
	\item A \emph{pair scoring function}, $f : V \times V
          \rightarrow \RR$, is constructed from the embedding of node pairs.
\end{asparaenum}
All of the evaluation is done with respect to this pair scoring function. We use this abstraction to encapsulate
all of the tasks downstream of the embedding generation itself. It consists of mapping the embedding vectors
to predictions in $\RR$. Ideally, $f(u, v) > f(x, y)$ whenever $u$ and $v$ share a community and $x$ and $y$ do not.
The construction of $f$ is based on the dot product of the embedding vectors of $u$ and $v$; details
are given in \Sec{features}. We stress that the vectors $\bv_1, \ldots, \bv_n$ incorporate information
about the neighborhoods. The underlying optimization of graph embeddings tries to produce close vectors 
for vertices in dense regions.

\subsection{Experimental results}

We observe that no choice of embedding method were competitive with
the baseline \lr method
on the task of community pair prediction. Across three datasets, \lr was consistently able
to identify community pairs while the embedding-based methods were not. For each method being compared, one thousand vertices
were selected at random from the graph, and we examine the precision of the classifier on the neighborhoods of each selected vertex. 
\Fig{p10-curves} shows that \lr makes more precise predictions on average than the best performing datasets, while \Fig{precision-table}
contains the mean precision for each classifier on each dataset.

The methods are evaluated by comparing the distribution of a precision@10 for each method on each dataset. 
For each of 1000 vertices sampled, $v$, 
we order the other vertices of the graph, $u_1, \ldots, u_n$ such that 
$f(\bv, \bu_i) \geq f(\bv, \bu_{i + 1})$ for all $i$. 
We compute the precision of the classifier on $(v, u_1), \ldots, (v, u_n)$ when it predicts positive labels
only for those $(v, u_i)$ such that $i \leq 10$. In other words, we sample a vertex at random and report
the fraction of its ten nearest neighbors in the embedding space with which it shares a community.

We represent the distribution of values of precision@10 scores as a reliability curve. This is the curve
$(x, y)$ such that at least a $y$ fraction of vertices sampled had a precision@10 score score of at least $x$.
Higher $y$ values for a given $x$ indicate better performance.
\Fig{p10-curves} contains the curves for the best performing dot product methods against the baseline,
while \Fig{precision-table} contains the mean across samples.

Given \Thm{perturb}, we expect that if $(u, v)$ is a community pair in some
ground truth matrix, $M$, then it is unlikely that $(u, v)$ is a community pair in any noisy approximation of $M$.
The results in this section bear out this conclusion. The community structure
of the original graphs are not preserved by the embeddings.

\begin{figure}[t]
	\begin{tabular}{| c | c | c | c |}
         	\hline
		method/dataset & amazon & dblp & sbm \\ \hline
		\lr & .92 & .80 & .88 \\ \hline
		DeepWalk & .44 & .49 & .35 \\ \hline
		NetMF & .49 & .28 & .66 \\ \hline
		Node2Vec & .49 & .61 & .55 \\ \hline
		DeepWalk-hp & .43 & .44 & .33 \\ \hline
		NetMF-hp & .41 & .37 & .76 \\ \hline
		Node2Vec-hp & .37 & .55 & .50 \\ \hline
	\end{tabular}
	\caption{Average precision@10 across 100 samples for all methods across the three datasets.
	In all cases, \lr is the best performing.}%
	\label{fig:precision-table}
\end{figure}
\subsection{Experimental setup}

\subsubsection{Community pair prediction methods}
We compare the performance of four different embedding methods (GraRep, DeepWalk, Node2Vec, NetMF) to a baseline to a simple supervised
logistic regression model using common structural graph features. 
The implementation of the embedding methods is based on
\cite{karateclub}. Except for setting the dimension to 128 for all
embedding methods, the hyperparameters are taken from
\cite{karateclub}. Our code is available at \url{https://github.com/astolman/snlpy}.
All our experiments were run on AWS using machines with up to 96 cores and 1 TB memory. Any method which did not complete in 24 hours
or which required more memory was considered not complete.

\begin{paragraph}{GraRep} This is a direct factorization method. The embedding it
	returns is the concatenation of an embedding fit to a modified version of the $k$-step random walk matrix
	for each $k$ up to a hyperparameter $K$. Consistent with \cite{karateclub}, we set $K = 5$. With this
	parameter setting, GraRep was not able to complete on the test machine since it requires explicitly computing
	powers of the adjacency matrix which is not feasible with any memory restraints.
\end{paragraph}

\begin{paragraph}{DeepWalk}
	This is a softmax factorization technique. It performs random walks to approximate the mean of the first $K$
	step random walk matrices (this is the window size parameter). In our experiments we use $K = 5$.
\end{paragraph}

\begin{paragraph}{Node2Vec}
	Like DeepWalk, this is a softmax factorization technique which generates an approximation to an average over
	the first $K$ random walk matrices by performing random walks.
	However, the random walks performed are a special kind which are controlled by parameters $p$ and $q$. When
	$p = q = 1$, the walks are identical to ordinary random walks. In our experiments, we use $p=q=0.5$ and $K = 5$.
\end{paragraph}

\begin{paragraph}{NetMF}
	This is a direct factorization technique which performs SVD on a modified version of the sum of the first $K$
	random walk matrices. For our experiments, and consistent with \cite{karateclub}, we use $K = 2$.
\end{paragraph}

\begin{paragraph}{\lr} For the baseline method, we use a logistic regression model trained to predict
	community pairs based on four tractable graph features found to be useful in the literature:
	\begin{asparaitem}
	\item Cosine similarity between $u$'s and $v$'s adjacency vectors \cite{SSG17},
	\item Size of the cut between $u$'s neighborhood and $v$'s neighborhood, and
	\item Personalized PageRank (PPR) score from $u$ to $v$, as well as that from $v$ to $u$ \cite{ACL07}.
	\end{asparaitem}
	We use the approximation
	algorithm from \cite{ACL07} to compute sparse approximate PPR vectors. Note that this allows for all features
	to be computed locally, i.e. the space/time complexity for computing the features for one vertex pair are
	independent of graph size. The overall space/time costs in practice are on par with the embedding methods.
\end{paragraph}

\subsubsection{Pair features}\label{sec:features}

The embedding methods produce a feature vector for each vertex. Since the community pair prediction task requires a
set of feature vectors over $V \times V$ and a method to produce scores, we need to
turn node features into pair features.  Building on strategies
proposed in prior work~\cite{GrLe16}, for the embeddings we choose
two methods from the literature which are computationally efficient, and relatively accurate. Given a $d$-dimensional
embedding, $\bv_1, \ldots, \bv_n$ we compute the feature vector for $(u, v) \in V \times V$ with the \emph{dot product}
and \emph{Hadamard product} denoted $\bu \cdot \bv$ and $\bu \circ \bv$ respectively.

The dot product is the usual inner product over $\RR^d$, i.e. $\bu \cdot \bv = \sum_{i \in [d]}\bu(i)\bv(i)$. It
takes the two $d$-dimensional vector embeddings and maps them to a $1$-dimensional feature vector. On the other
hand, the Hadamard product of $\bu$ and $\bv$, $\bu \circ \bv$, is a $d$-dimensional vector whose $i$th coordinate
is the product of the $i$th coordinates of $\bu$ and $\bv$.

We will indicate when the pair features used are the dot product or Hadamard product of the endpoints. The actual
scoring function, $f: V \times V \rightarrow \RR$, we use is implicit from the pair features. Whenever the pair
features are dot products, $f(u, v) = \bu \cdot \bv$. Whenever the pair features are the Hadamard product of
$\bu$ and $\bv$, we fit a logistic regression model to the features $\bu \circ \bv$ to produce a weight vector
$\beta$, and so the pair scoring function becomes $f(u, v) = \sigma(\beta \cdot (\bu \circ \bv))$ where
$\sigma = \exp(x)/(1 + \exp(x))$ is the sigmoid function.

In order to compute the weight vector, $\beta$, in the Hadamard product case, we select a training set of size
$50n$ by choosing 50 vertices which
participate in a community with at least twenty members, $v_1, \ldots, v_{50}$, and compute $\bu \circ \bv_i$
for each $u \in V$. This collection of $50n$ feature vectors, along with ground truth labeling of community pairs,
are given to an sklearn LogisticRegression object which produces $\beta$.

\begin{figure*}[h]
	\begin{tabular}{c c c}
	\includegraphics[width=.33\linewidth]{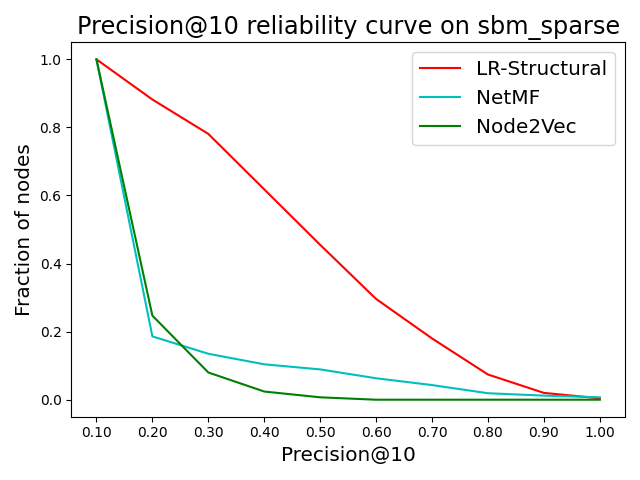} &
	\includegraphics[width=.33\linewidth]{figures/sbm_curves.png}
	\includegraphics[width=.33\linewidth]{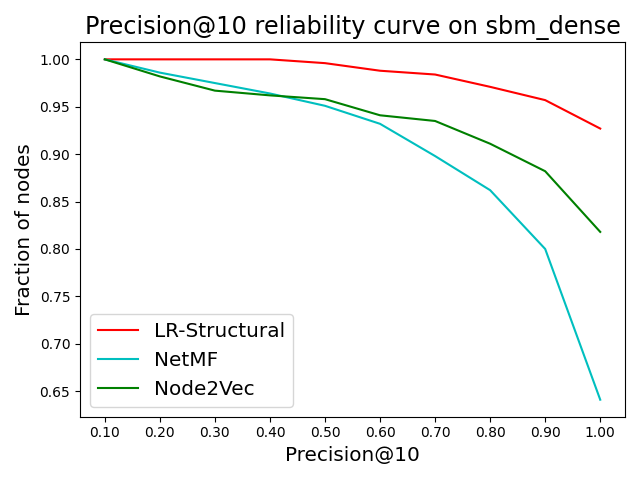}  &
\end{tabular}
	\caption{Precision@10 reliability curves for sbm datasets. All datasets are stochastically generated datasets with size 10,000
	and disjoint communities of size 20. The average degrees are 4, 12 and 20 from left to right.
	Curves are generated from 100 samples.}%
	\label{fig:sbms}
\end{figure*}\subsubsection{Datasets}

We show the performance of the various embedding methods contrasted with \lr on three datasets: two publicly available
real world datasets with ground truth community labels, and synthetic stochastic block models (SBM).
\begin{asparaitem}
	\item dblp: a co-authorship network of 317K computer science authors with
		communities defined as venues
	\item amazon: a network of 335K products on amazon with a link representing
		frequent co-purchasing. Communities are product categories.
	\item sbm: synthetic dataset with 100K vertices and communities of size 20. Edges are randomly generated
		with an inter-community edge probability of 0.3 and intra-community edge probability of $0.3/n$.
\end{asparaitem}

\begin{table}[h]
	\centering
\begin{tabular} {| c | c | c | c |}
	\hline
	\bf Statistic	& \bf dblp	& \bf amazon	& \bf sbm \\
	\hline
	\makecell{number \\of nodes} & 317K	& 334K		& 100K \\
	\hline
	\makecell{number \\of edges}	& 1M	& 1M		& 585K \\
	\hline
	\makecell{num \\communites}	& 13K	& 75K		& 5K \\
	\hline
	max size	& 75K	& 53K		& 20 \\
	\hline
	median size	& 8	& 5		& 20 \\
	\hline
	\makecell{mean comm \\density} & 0.09 & 0.10 & 0.30 \\
	\hline
\end{tabular}
\caption{Dataset summary}
\medskip
\footnotesize
{\bf
``Mean comm density'' is an average weighted by community size of the edge density of each community. 
 \label{tab:datasets}
}
\end{table}

\section{Stochastic block models}

We complement our experimental results on real datasets with measuring
performance of graph embedding methods on synthetic datasets generated according to the popular Stochastic Block
Model (SBM). 

An $n$-vertex graph, $G$, is generated according to an SBM by
partitioning the vertex set, $[n]$ into $k$ equal sized communities, $C_1, C_2, \ldots,
C_k$. The distribution of edges is controlled by two parameters.
If two vertices are in the same community, an edge is added with probability $p$.
If not, then an edge is added with probability $q$.

We chose the SBM parameters in order
to approximate some of the features we observed in the empirical datasets.
All SBMs have one hundred
thousand vertices and communities (or ``blocks'') of size 20. The parameter $q$ is set so
that the average number of neighbors a vertex has inside its community
is equal to that outside of its community. In real data, there are often 3 or 4 times as many
inter-community neighbors as intra-community neighbors.  
The three SBM graphs generated, sbm\_sparse, sbm, sbm\_dense, have average degrees
4, 12 and 20 respectively. 
 
Motivated by \Thm{softmaxpos}, we can study the embeddings' resilience
to noise in the simulated setting.  \Fig{sbms} shows the results from this experiment. Note that the
accuracy of the embeddings at pairwise community labeling decreases as
the internal density of the communities decreases.
In all cases, \lr{} is able to outperform the factorization-
based embedding methods across a regime of parameters commonly encountered in sparse datasets. 

\section{Proof of instability in softmax factorizations} \label{sec:proof}

We restate and prove the upper bound theorem, showing that softmax factorizations
can recreate community structure.

\softmaxpos

% \begin{restatable}{theorem}{softmaxpos}\label{thm:softmaxpos} For $d = O(\log n)$, there exists $V \in \mathbb{R}^{d \times n}$ such
% that $\sm(V)_{ij}$ exhibits community structure. Specifically, for any natural number $b \leq n$,
% there exists $V \in \mathbb{R}^{d \times n}$  such that $\sm(V)$ has $n/b$ blocks
% of size $b$, such that all entries within blocks are at least $1/2b$.
% \end{restatable}
% 

\begin{proof} We apply the probabilistic method. We select $V$ from a random distribution, and 
prove that the desired community structure is exhibited with high probability.

Let $d = c\ln n$, for some sufficiently large constant $c$. We will construct $n/b$ blocks of vertices,
each with $b$ vertices. All the vertices in the $b$th block will be represented by the same vector $\vec{v}_b$.
We will set each $\vec{v}_b$ to be a uniform random Gaussian vector of length $2\sqrt{\ln n}$. Thus, for any two vertices
$i,j$ within a block, $\vec{v}_i \cdot \vec{v}_j = 4\ln n$. For two vertices $i,j$ that occur in different blocks,
the dot product $\vec{v}_i \cdot \vec{v}_j$ is normally distributed with mean zero
and variance $(4\ln n)/d = 4/c$.
By tail bounds for the Gaussian, $\Pr[\vec{v}_i \cdot \vec{v}_j \geq (16\ln n)/c] \leq 1/n^4$,
where the probability is over the choice of the vectors. By a union bound over all (at most) $n^2$ pairs,
with probability at least $1-1/n^2$, for every pair $i, j$ in distinct blocks,
$\vec{v}_i \cdot \vec{v}_j < (16\ln n)/c$, which is at most $\ln n$ (for sufficiently large $c$).

Note that, for all vertices $i$, we can split the sum $\sum_k \exp(\vec{v}_i \cdot \vec{v}_k)$
into vertices within $i$'s block and outside $i$'s block. The total sum outside the block
is at most $n \times \exp(\ln n) = n^2$. The sum within the block is $b \exp(4\ln n) = bn^4$.
For $i,j$ within a block, $\sm(V)_{ij} \geq \exp(4\ln n)/(bn^4 + n^2) \geq 1/2b$. Thus, $\sm(V)$
exhibits the desired community structure.
\end{proof}

We note some features of this construction. Firstly, the vectors are of non-constant length.
Moreover, the vectors within a community are extremely close to each other compared to their length; 
in the construction, they are actually identical.

We essentially prove that both these properties are necessary for \emph{any} $V$
where $\sm(V)$ exhibits community structure. The second property of closeness is extremely sensitive
to noise, and we prove that even small amounts of noise can destroy the community structure.
In practice, we believe that such unstable solutions are hard to find, and the ambient
noise in data forces solutions that are stable to perturbations. This point is validated
in our experiments, where the solutions do not exhibit community structure.

\begin{lemma} \label{lem:length} For at least $0.99n$ vertices $i$, the following
property holds. If $(i,j)$ is a potential community pair in $\sm(V)$,
then $\|\vec{v}_i\|_2 \geq \sqrt{\ln (\eps n/100)}/2$ and $\Big |\|\vec{v}_i\|_2 - \|\vec{v}_j\|_2\Big| \leq 2\ln(1/\eps)/\sqrt{\ln (\eps n/100)}$.
\end{lemma}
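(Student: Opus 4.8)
The plan is to decouple the two claims in the lemma: I will prove the length lower bound (call it Part 1) for all but $o(n)$ vertices, and then show that the length-difference bound (Part 2) follows from Part 1 applied to the endpoint $i$ alone. First I would record the elementary consequence of self-normalization. Writing $Z_i := \sum_k \exp(\vec{v}_i\cdot\vec{v}_k)$, the $k=i$ term gives $Z_i \ge \exp(\|\vec{v}_i\|^2)$, so a community entry $\sm(V)_{ij}\ge\eps$ forces $\exp(\vec{v}_i\cdot\vec{v}_j)\ge\eps Z_i\ge\eps\exp(\|\vec{v}_i\|^2)$, i.e. $\vec{v}_i\cdot\vec{v}_j\ge\|\vec{v}_i\|^2-\ln(1/\eps)$. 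Combining with Cauchy--Schwarz $\vec{v}_i\cdot\vec{v}_j\le\|\vec{v}_i\|\,\|\vec{v}_j\|$ and dividing by $\|\vec{v}_i\|$ yields $\|\vec{v}_j\|\ge\|\vec{v}_i\|-\ln(1/\eps)/\|\vec{v}_i\|$; the symmetric bound from $\sm(V)_{ji}\ge\eps$ gives $\|\vec{v}_i\|\ge\|\vec{v}_j\|-\ln(1/\eps)/\|\vec{v}_j\|$. A one-line case split on which norm is larger then shows $\big|\|\vec{v}_i\|-\|\vec{v}_j\|\big|\le\ln(1/\eps)/\|\vec{v}_i\|$ unconditionally. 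Hence once I know $\|\vec{v}_i\|\ge L:=\sqrt{\ln(\eps n/100)}/2$, Part 2 is immediate and equals $2\ln(1/\eps)/\sqrt{\ln(\eps n/100)}$, with \emph{no} length assumption on the partner $j$. Adding the two dot-product inequalities also produces the closeness bound $\|\vec{v}_i-\vec{v}_j\|^2\le2\ln(1/\eps)$, which I reuse below.

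It then remains to establish Part 1 for at least $0.99n$ vertices. I call a vertex $i$ \emph{bad} if $\|\vec{v}_i\|<L$ yet $i$ lies in some potential community pair; any vertex that is not bad either satisfies Part 1 directly (if it participates in a community pair it cannot be short) or vacuously, so it suffices to show $|B|\le 0.01n$ for the bad set $B$. For bad $i$ with partner $j$, the closeness bound gives $\|\vec{v}_j\|<L':=L+\sqrt{2\ln(1/\eps)}$, so $\vec{v}_i\cdot\vec{v}_j\le\|\vec{v}_i\|\,\|\vec{v}_j\|<LL'$, and $\sm(V)_{ij}\ge\eps$ forces $Z_i\le\exp(\vec{v}_i\cdot\vec{v}_j)/\eps<\exp(LL')/\eps$. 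In the intended regime $\ln(1/\eps)=o(\ln n)$ one checks $LL'=\tfrac14\ln n+o(\ln n)$, so this bound is $n^{1/4+o(1)}\le n^{1/2}$ for large $n$; thus every bad vertex has a small partition function $Z_i\le n^{1/2}$.

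The crux is a short counting argument converting ``small $Z_i$'' into a bound on $|B|$. Restricting each softmax sum to $B$ and summing over bad vertices, $\sum_{i\in B}Z_i\ge\sum_{i,i'\in B}\exp(\vec{v}_i\cdot\vec{v}_{i'})$. By convexity of $\exp$, the average of these $|B|^2$ exponentials is at least $\exp\!\big(|B|^{-2}\sum_{i,i'\in B}\vec{v}_i\cdot\vec{v}_{i'}\big)=\exp\!\big(\|\sum_{i\in B}\vec{v}_i\|^2/|B|^2\big)\ge1$, so $\sum_{i\in B}Z_i\ge|B|^2$. Combined with $\sum_{i\in B}Z_i\le|B|\,n^{1/2}$ this gives $|B|\le n^{1/2}\le0.01n$ for $n\ge10^4$, which finishes Part 1 and hence the lemma.

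I expect Part 1 to be the real obstacle. The naive route---choose $k$ with $\sm(V)_{ik}\le1/n$, deduce $\vec{v}_i\cdot(\vec{v}_j-\vec{v}_k)\ge\ln(\eps n)$, and conclude by Cauchy--Schwarz that \emph{one} of $\vec{v}_i,\vec{v}_j,\vec{v}_k$ is long---does not by itself pin the length on $\vec{v}_i$, since the culprit may be the out-of-community witness $\vec{v}_k$. The difficulty is to rule out many such ``short-but-paired'' vertices, and the key idea is that a bad vertex must be strongly anti-correlated with a large set of vectors (forcing $Z_i$ small), which in turn forces bad vertices to be positively correlated among themselves; the Jensen step above is precisely what exploits this to cap $|B|$ at $n^{1/2}$. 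The remaining care is purely bookkeeping on the $\eps$-dependence, ensuring $\exp(LL')/\eps=n^{1/2-\Omega(1)}$ under $\ln(1/\eps)=o(\ln n)$ so that $|B|=o(n)$ and $0.99n$ vertices survive.
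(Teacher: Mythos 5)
Your proof is correct in the regime the paper intends ($\eps$ a small constant or $1/\polylog(n)$, $n$ large), but it reaches the length lower bound by a genuinely different route than the paper. For the difference bound you use the same device as the paper: compare the numerator $\exp(\vec{v}_i\cdot\vec{v}_j)$ against the diagonal term $\exp(\|\vec{v}_i\|_2^2)$ hiding in the softmax denominator, then apply Cauchy--Schwarz; your symmetric treatment (invoking both $\sm(V)_{ij}\ge\eps$ and $\sm(V)_{ji}\ge\eps$ and case-splitting on the larger norm) is in fact slightly cleaner than the paper's ``wlog $\|\vec{v}_i\|_2$ is larger'' step. The real divergence is the length bound. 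The paper exploits normalization: for each $i$, Markov gives at least $0.99n$ indices $k$ with $\sm(V)_{ik}\le 100/n$, so $\vec{v}_i\cdot(\vec{v}_j-\vec{v}_k)\ge\ln(\eps n/100)$; Cauchy--Schwarz and the triangle inequality then force one of $\vec{v}_i,\vec{v}_j,\vec{v}_k$ to be long, and a dichotomy (either every community pair has a long endpoint, or $0.99n$ vertices are long) concludes --- but, exactly as you anticipated in your last paragraph, this pins down only $\max(\|\vec{v}_i\|_2,\|\vec{v}_j\|_2)$, and the paper's final ``wlog'' silently loses a lower-order term when the max sits on $j$. Your bad-set argument sidesteps this: a short vertex in a community pair has a short partner (by $\|\vec{v}_i-\vec{v}_j\|_2^2\le 2\ln(1/\eps)$), hence a small numerator and so $Z_i\le\exp(LL')/\eps\le\sqrt{n}$ with your $L,L'$, while positivity and Jensen give $\sum_{i\in B}Z_i\ge\sum_{i,i'\in B}\exp(\vec{v}_i\cdot\vec{v}_{i'})\ge|B|^2\exp\bigl(\|\sum_{i\in B}\vec{v}_i\|_2^2/|B|^2\bigr)\ge|B|^2$, capping $|B|\le\sqrt{n}$. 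What your approach buys: an exceptional set of size $\sqrt{n}$ rather than $0.01n$, and a conclusion literally about $\|\vec{v}_i\|_2$ itself rather than about the larger of the two norms. What it costs: you need $\ln(1/\eps)=o(\ln n)$ and $n\ge 10^4$ to get $\exp(LL')/\eps\le\sqrt{n}\le 0.01n$, hypotheses absent from the lemma statement (the paper's argument needs only $\eps n>100$), though they are fully consistent with the paper's standing assumption on $\eps$ and with how the lemma is used in the proof of the perturbation theorem.
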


\begin{proof} 
Recall that $\sm(V)_{i,j} = \exp(\vec{v}_i \cdot \vec{v}_j)/\sum_k \exp(\vec{v}_i \cdot \vec{v}_k)$.
There are $n$ vertices, so by Markov's inequality, for at least $0.99n$ vertices $k$, $\sm(V)_{i,k} \leq 100/n$.
Pick any such $k$, and note that $\sm(V)_{i,j}/\sm(V)_{i,k} = \exp(\vec{v}_i \cdot \vec{v}_j - \vec{v}_i \cdot \vec{v}_k)
\geq \eps n/100$. Taking logs, 
\begin{eqnarray*}
    \vec{v}_i \cdot (\vec{v}_j - \vec{v}_k) \geq \ln(\eps n/100) \\
    \Longrightarrow \|\vec{v}_i\|_2 \|\vec{v}_j - \vec{v}_k\|_2 \geq \ln(\eps n/100)  \\
    \Longrightarrow \|\vec{v}_i\|_2 (\|\vec{v}_j\|_2 + \|\vec{v}_k\|_2) \geq \ln(\eps n/100) 
\end{eqnarray*}
(The first implication comes from the Cauchy-Schwartz inequality, and the second comes
from the triangle inequality.)

For convenience, let $N := \eps n/100$.
Suppose that $\|\vec{v}_i\|_2 < \sqrt{\ln N}/2$ and $\|\vec{v}_j\|_2 < \sqrt{\ln N}/2$.
This implies that $\|\vec{v}_j\|_2 + \|\vec{v}_k\|_2 \geq \sqrt{\ln N}$, and thus,
$\|\vec{v}_k\|_2 \geq \sqrt{\ln N}/2$. 

Thus, one of the following holds. Either for every potential community pair $(i,j)$, 
$\max(|\vec{v}_i\|_2, \|\vec{v}_j\|_2) \geq \sqrt{\ln N}/2$, or
for at least $0.99n$ vertices $k$, $\|\vec{v}_k\|_2 \geq \sqrt{\ln N}/2$.
Regardless, for at least $0.99n$ vertices $i$, for every potential community pair $(i,j)$,
$\max(|\vec{v}_i\|_2, \|\vec{v}_j\|_2) \geq \sqrt{\ln N}/2$.

Consider a potential community pair $(i,j)$ where
$\max(|\vec{v}_i\|_2, \|\vec{v}_j\|_2) \geq \sqrt{\ln N}/2$. Wlog, let
$\|\vec{v}_i\|_2$ be larger. Recall that
$\sm(V)_{ij} = \exp(\vec{v}_i \cdot \vec{v}_j)/\sum_k \exp(\vec{v}_i \cdot \vec{v}_k) \geq \eps$.
\begin{eqnarray}
    & & \exp(\vec{v}_i \cdot \vec{v}_j) \geq \eps \sum_k \exp(\vec{v}_i \cdot \vec{v}_k) \nonumber \\
    & \Longrightarrow & \exp(\vec{v}_i \cdot \vec{v}_j) \geq \eps \exp(\vec{v}_i \cdot \vec{v}_i) \nonumber \\
    & \Longrightarrow & \vec{v}_i \cdot \vec{v}_j \geq \ln \eps + \|\vec{v}_i\|^2_2 \nonumber \\
    & \Longrightarrow & \|\vec{v}_i\|_2 \|\vec{v}_j\|_2 \geq \ln \eps + \|\vec{v}_i\|^2_2 \ \ \textrm{(Cauchy-Schwartz)} \nonumber \\
    & \Longrightarrow & \|\vec{v}_j\|_2 \geq \|\vec{v}_i\|_2 - (\ln 1/\eps)/\|\vec{v}_i\|_2 \nonumber
\end{eqnarray}
Since $\|\vec{v}_i\|_2 \geq \sqrt{\ln N}/2$ (and $\|\vec{v}_i\|_2 \geq \|\vec{v}_j\|_2$), 
$|\|\vec{v}_i\|_2 - \|\vec{v}_j\|_2| \leq 2\ln(1/\eps)/\sqrt{\ln N}$
\end{proof}

We now state the perturbation instability theorem, which should be intuitively
clear by Lemma~\ref{lem:length}. For (almost all) community pairs $(i,j)$,
both $\|\vec{v}_i\|_2$ and $\|\vec{v}_j\|_2$ are at least $\Omega(\sqrt{\ln n})$,
but the \emph{difference} between the lengths is $O(1/\sqrt{\ln n})$.
Infinitesimal perturbations will destroy such a property, and thus the community
pair will be lost.

\perturb

\begin{proof} Let us apply Lemma~\ref{lem:length} to both $V$ and $\pV{\delta}$. By the union
bound, there are at least $0.98n$ vertices $i$ that satisfy the property of
the lemma for both $V$ and $\sm(V)$. Consider any such $i$. Recall
that $\pV{\delta}$ is formed by perturbing each $\vec{v}_i$ by $e^{X_i} \vec{v}_i$,
where $X_i \sim \cN(0,\delta^2)$. For convenience, let $\vec{v'_i}$ denote
the perturbed vectors. By the properties of Lemma~\ref{lem:length},
for $(i,j)$ to be a community pair,

$$ \Big |\|\vec{v'_i}\|_2 - \|\vec{v'_j}\|_2\Big| \leq 2\ln(1/\eps)/\sqrt{\ln (\eps n/100)} $$

For convenience, set $\alpha := 2\ln(1/\eps)/\sqrt{\ln (\eps n/100)}$.
We conclude from above that
\begin{eqnarray*}
\|\vec{v'_j}\|_2 = e^{X_j} \|\vec{v}_j\|_2 \in [\|\vec{v'_i}\|_2 - \alpha, \|\vec{v'_i}\|_2 + \alpha] \\
\Longrightarrow X_j + \ln \|\vec{v}_j\|_2 \in [\ln(\|\vec{v'_i}\|_2 - \alpha), \ln(\|\vec{v'_i}\|_2 + \alpha)]
\end{eqnarray*}

By Lemma~\ref{lem:length} applied to $\pV{\delta}$, $\|\vec{v'_i}\|_2 \geq \sqrt{\ln(\eps n/100)}/2$. For convenience, set $\beta := \|\vec{v'_i}\|_2$.
Note that $\ln(\beta - \alpha) = \ln \beta + \ln(1-\alpha/\beta)$. Since $\alpha/\beta \leq 4\ln(1/\eps)/\ln(\eps n/100) < 1$,
we can lower bound $\ln(1-\alpha/\beta) \geq -2\alpha/\beta$. Similarly, we can upper bound $\ln(\beta+\alpha)$ by $\ln \beta + \ln(1+\alpha/\beta) \leq \ln \beta + 2\alpha/\beta$.

Hence, $X_j + \ln \|\vec{v}_j\|_2$ lies in an interval of size at most $4\alpha/\beta \leq 16\ln(1/\eps)/\ln(\eps n/100)$. 
Recall that $X_j$ is a Gaussian distributed as $\cN(0,\delta^2)$. For $\delta > c\ln(1/\eps)/\ln(\eps n/100)$,
we need $X_j$ to lie in an interval of size at most $16/c$ times the standard deviation ($\delta$). 
By the properties of the Gaussian, this probability is at most $0.01$, for some sufficiently large constant $c$.
Thus, the pair $(i,j)$ will be a community pair with probability at most $0.01$.
\end{proof}

\bibliographystyle{alpha}
\bibliography{embeddings}

\newcommand{\etalchar}[1]{$^{#1}$}
\begin{thebibliography}{CAEHP{\etalchar{+}}20}

\bibitem[CAEHP{\etalchar{+}}20]{MLGSurvey20}
Ines Chami, Sami Abu-El-Haija, Bryan Perozzi, Christopher R{\'e}, and Kevin
  Murphy.
\newblock Machine learning on graphs: A model and comprehensive taxonomy.
\newblock {\em arXiv:2005.03675}, 2020.

\bibitem[CAS16]{covington2016deep}
Paul Covington, Jay Adams, and Emre Sargin.
\newblock Deep neural networks for youtube recommendations.
\newblock In {\em Proceedings of the 10th ACM conference on recommender
  systems}, pages 191--198, 2016.

\bibitem[CLX15]{CaLu15}
Shaosheng Cao, Wei Lu, and Qiongkai Xu.
\newblock {GraRep}: Learning graph representations with global structural
  information.
\newblock In {\em Conference on Information and Knowledge Management (CIKM)},
  pages 891--900. {ACM} Press, 2015.

\bibitem[CMST20]{CMST20}
Sudhanshu Chanpuriya, Cameron Musco, Konstantinos Sotiropoulos, and
  Charalampos~E Tsourakakis.
\newblock Node embeddings and exact low-rank representations of complex
  networks.
\newblock In {\em NeurIPS}, 2020.

\bibitem[DG06]{DG06}
Jesse Davis and Mark Goadrich.
\newblock The relationship between precision-recall and roc curves.
\newblock In {\em International Conference on Machine Learning (ICML)}, pages
  233--240, 2006.

\bibitem[EK10]{easley2010networks}
David Easley and Jon Kleinberg.
\newblock {\em Networks, crowds, and markets}, volume~8.
\newblock Cambridge university press Cambridge, 2010.

\bibitem[Ger20]{Ger}
Gershgorin circle theorem.
\newblock \url{https://en.wikipedia.org/wiki/Gershgorin_circle_theorem}, 2020.

\bibitem[GJJ20]{GJJ20}
Vikas~K Garg, Stefanie Jegelka, and Tommi Jaakkola.
\newblock Generalization and representational limits of graph neural networks.
\newblock {\em arXiv:2002.06157}, 2020.

\bibitem[GL16]{GrLe16}
Aditya Grover and Jure Leskovec.
\newblock node2vec: Scalable feature learning for networks.
\newblock In {\em Conference on Knowledge Discovery and Data Mining (KDD)},
  pages 855--864. {ACM}, 2016.

\bibitem[HLG{\etalchar{+}}20]{HLGZLPL20}
Weihua Hu*, Bowen Liu*, Joseph Gomes, Marinka Zitnik, Percy Liang, Vijay Pande,
  and Jure Leskovec.
\newblock Strategies for pre-training graph neural networks.
\newblock In {\em International Conference on Learning Representations}, 2020.

\bibitem[HYL17]{HaYi17}
Will Hamilton, Zhitao Ying, and Jure Leskovec.
\newblock Inductive representation learning on large graphs.
\newblock In {\em Neural Information Processing Systems ({NeurIPS})}, page~11,
  2017.

\bibitem[HYL18]{HaYi18}
William~L Hamilton, Rex Ying, and Jure Leskovec.
\newblock Representation learning on graphs: Methods and applications.
\newblock page~24, 2018.

\bibitem[Lou20]{Loukas20}
Andreas Loukas.
\newblock What graph neural networks cannot learn: depth vs width.
\newblock In {\em International Conference on Learning Representations}, 2020.

\bibitem[LRU20]{mmds_book}
Jure Leskovec, Anand Rajaraman, and Jeffrey~David Ullman.
\newblock {\em Mining of massive data sets}.
\newblock Cambridge university press, 2020.

\bibitem[Mur21]{pml1Book}
Kevin~P. Murphy.
\newblock {\em Probabilistic Machine Learning: An introduction}.
\newblock MIT Press, 2021.

\bibitem[OCP{\etalchar{+}}16]{OuCu16}
Mingdong Ou, Peng Cui, Jian Pei, Ziwei Zhang, and Wenwu Zhu.
\newblock Asymmetric transitivity preserving graph embedding.
\newblock In {\em Conference on Knowledge Discovery and Data Mining (KDD)},
  pages 1105--1114. {ACM}, 2016.

\bibitem[PARS14]{PeAl14}
Bryan Perozzi, Rami Al-Rfou, and Steven Skiena.
\newblock {DeepWalk}: Online learning of social representations.
\newblock In {\em Conference on Knowledge Discovery and Data Mining (KDD)},
  pages 701--710. {ACM} Press, 2014.

\bibitem[QDM{\etalchar{+}}18]{QiDo18}
Jiezhong Qiu, Yuxiao Dong, Hao Ma, Jian Li, Kuansan Wang, and Jie Tang.
\newblock Network embedding as matrix factorization: Unifying {DeepWalk},
  {LINE}, {PTE}, and node2vec.
\newblock In {\em Conference on Web Science and Data Mining (WSDM)}. {ACM}
  Press, 2018.

\bibitem[RAL07]{ACL07}
Fan~Chung Reid~Andersen and Kevin Lang.
\newblock Using pagerank to locally partition a graph.
\newblock {\em Internet Mathematics}, 4:35--64, 2007.

\bibitem[RKS20]{karateclub}
Benedek Rozemberczki, Oliver Kiss, and Rik Sarkar.
\newblock {Karate Club: An API Oriented Open-source Python Framework for
  Unsupervised Learning on Graphs}.
\newblock In {\em Conference on Information and Knowledge Management (CIKM)}.
  ACM, 2020.

\bibitem[SSG17]{SSG17}
Aneesh Sharma, C.~Seshadhri, and Ashish Goel.
\newblock When hashes met wedges: A distributed algorithm for finding high
  similarity vectors.
\newblock In {\em Conference on the World Wide Web (WWW)}, page 431–440,
  2017.

\bibitem[SSSG20]{SeSh20}
C.~Seshadhri, Aneesh Sharma, Andrew Stolman, and Ashish Goel.
\newblock The impossibility of low-rank representations for triangle-rich
  complex networks.
\newblock {\em Proceedings of the National Academy of Sciences},
  117(11):5631--5637, 2020.

\bibitem[TW17]{KW17}
N~Kipf Thomas and Max Welling.
\newblock Semi-supervised classification with graph convolutional networks.
\newblock {\em International Conference on Learning Representations}, 2, 2017.

\bibitem[Twi18]{Twitter-embeddings}
Embeddings@twitter.
\newblock
  \url{https://blog.twitter.com/engineering/en_us/topics/insights/2018/embeddingsattwitter.html},
  2018.

\bibitem[VCC{\etalchar{+}}18]{VGCRLB18}
Petar Veličković, Guillem Cucurull, Arantxa Casanova, Adriana Romero, Pietro
  Liò, and Yoshua Bengio.
\newblock Graph attention networks.
\newblock In {\em International Conference on Learning Representations}, 2018.

\bibitem[XHLJ19]{XHLJ19}
Keyulu Xu, Weihua Hu, Jure Leskovec, and Stefanie Jegelka.
\newblock How powerful are graph neural networks?
\newblock In {\em International Conference on Learning Representations}, 2019.

\end{thebibliography}

\end{document}